\documentclass{article}
\usepackage{amsmath,amssymb,amsthm}
\usepackage{physics}
\usepackage{tikz}
\usetikzlibrary{cd}
\usepackage{geometry}
\usepackage{microtype}
\usepackage{enumitem}
\usepackage{url}
\newtheorem{definition}{Definition}

\newtheorem{proposition}{Proposition}
\newtheorem{lemma}{Lemma}

\newtheorem{remark}{Remark}
\newtheorem{example}{Example}

\title{Quantum Fuzzy Sets Revisited: \\
Density Matrices, Decoherence, and the Q-Matrix Framework}
\author{M.\,A.~Mannucci}
\date{\today}

\begin{document}

\maketitle

\begin{abstract}
In 2006 we proposed Quantum Fuzzy Sets, observing that states of a quantum register could serve as characteristic functions of fuzzy subsets, embedding Zadeh's unit interval into the Bloch sphere. That paper was deliberately preliminary. In the two decades since, the idea has been taken up by researchers working on quantum annealers, intuitionistic fuzzy connectives, and quantum machine learning, while parallel developments in categorical quantum mechanics have reshaped the theoretical landscape. The present paper revisits that programme and introduces two main extensions. First, we move from pure states to density matrices, so that truth values occupy the entire Bloch ball rather than its surface; this captures the phenomenon of \emph{semantic decoherence} that pure-state semantics cannot express. Second, we introduce the \textbf{Q-Matrix}, a global density matrix from which individual quantum fuzzy sets emerge as local sections via partial trace. We define a category $\mathbf{QFS}$ of quantum fuzzy sets, establish basic structural properties (monoidal structure, fibration over $\mathbf{Set}$), characterize the classical limit as simultaneous diagonalizability, and exhibit an obstruction to a fully internal Frobenius-algebra treatment. A companion Python reference implementation is available at \url{https://github.com/Mircus/QFS-LAB}.
\end{abstract}

\noindent\textbf{MSC2020:} Primary 81P45; Secondary 18M05, 03B52.\\
\textbf{ACM CCS:} Theory of computation $\to$ Categorical semantics; Theory of computation $\to$ Quantum computation theory; Mathematics of computing $\to$ Information theory; Computing methodologies $\to$ Knowledge representation and reasoning.\\
\textbf{Keywords:} quantum fuzzy sets; density-matrix semantics; quantum semantics; Q-Matrix; categorical quantum semantics; fuzzy truth values; quantum information and meaning; decoherence and semantics; POVM-based semantics.

\section{Origins and Motivation}

The ideas developed in this paper originate in a short 2006 article posted to the arXiv under the title \emph{Quantum Fuzzy Sets: Blending Fuzzy Set Theory and Quantum Computation}~\cite{Mannucci2006}. The core observation was simple: since Zadeh's membership values live in $[0,1]$, and since the probability of collapsing a qubit into $|1\rangle$ also lives in $[0,1]$, one could regard the state of an $n$-qubit register as the characteristic function of a ``quantum fuzzy subset'' of an $n$-element set. The real unit interval embeds in the Bloch sphere; every classical fuzzy set is therefore automatically a quantum fuzzy set. A generic quantum fuzzy set, however, can be much richer---for example, an entangled superposition of exponentially many fuzzy sets at once.

The paper was explicitly preliminary, ending with a list of open problems and speculative applications to pattern recognition. Two companion papers were announced---\emph{Quantum Internal Logic I: A Tale of Quantum Reasoning} and \emph{Quantum Internal Logic II: Categories of Quantum Fuzzy Sets}---that would develop the categorical and logical foundations. Those papers were never completed. Life intervened, as it does, and the project drifted into dormancy.

Subsequent work developed the idea in directions not anticipated in the original note, and it took time to see more clearly what the framework was really about.

\section{What Happened Next: Two Decades of Quantum Fuzziness}

The 2006 paper has been cited across several distinct research communities, each seizing on a different aspect of the original proposal. Tracing these lines of descent reveals both the fertility and the incompleteness of the initial framework.

\subsection{The Brazilian School: Quantum Gates as Fuzzy Connectives}

The most sustained line of development came from Reiser, Lemke, Avila, Visintin, and collaborators at the Universidade Federal de Pelotas. Beginning with Visintin et al.'s work on aggregation operations from quantum computing~\cite{Visintin2013} and continuing through the present, they built an extensive programme interpreting fuzzy connectives---negations, t-norms, t-conorms, implications, and their intuitionistic generalizations---as quantum gate operations~\cite{Reiser2016}. Their key insight was operational: the Toffoli gate naturally implements fuzzy AND, controlled rotations implement graded membership transformations, and the superposition of membership and non-membership degrees in Atanassov's intuitionistic framework maps cleanly onto the two-amplitude structure of a qubit. They extended this to intuitionistic fuzzy bi-implications~\cite{Agostini2018}, XOR connectives~\cite{Avila2019}, and most recently to fuzzy modal operators with full Qiskit simulations~\cite{Buss2024}.

What the Brazilian school demonstrated, in effect, was that the 2006 framework was \emph{computationally natural}: the correspondence between fuzzy operations and quantum gates was not a forced analogy but a structural fact about the geometry of the Bloch sphere.

\subsection{Fuzzy Logic on Quantum Hardware}

A separate strand pursued the engineering question directly: can we run fuzzy inference on actual quantum processors? Acampora, Luongo, and Vitiello~\cite{Acampora2018} gave the first implementation of fuzzy rule-based systems on quantum computers, using Grover's algorithm to accelerate fuzzy inference. Pourabdollah and Acampora~\cite{Pourabdollah2021} took a different route entirely, reformulating fuzzy set operations as Quadratic Unconstrained Binary Optimization (QUBO) problems and solving them on D-Wave quantum annealers. Their representation of fuzzy sets through QUBO problems is conceptually striking: it shows that the quantum-fuzzy correspondence extends beyond the gate model to the adiabatic paradigm, where the ground state of an Ising Hamiltonian encodes the result of a fuzzy computation.

\subsection{Further Ramifications}

Other lines of work extended the 2006 ideas in directions I had not foreseen. Kashmadze~\cite{Kashmadze2017} undertook a systematic ``fuzzification of the Bloch ball,'' pushing the geometric analysis further than the original paper had attempted. Deng and Deng~\cite{Deng2021} generalized Z-numbers (which combine fuzzy restriction and reliability) to the quantum setting, introducing Quantum Z-Numbers with associated quantum circuits. Van Han and Vinh~\cite{VanHan2022} proposed a ``fuzzy quantum logic'' grounded in Birkhoff--von~Neumann lattice theory. Jain, Gomes, and Madeira~\cite{Jain2021} developed a specification theory for fuzzy modal logic that connects to the broader programme of uncertainty modeling in hybrid systems. And Dymowa~\cite{Dymowa2011} situated quantum fuzzy methods within the wider landscape of uncertainty modeling alongside Dempster-Shafer theory and interval analysis.

\subsection{What Was Missing}

Taken together, this body of work confirmed the computational and operational viability of quantum fuzzy sets. But it also exposed three significant gaps in the original framework:

\begin{enumerate}[label=(\roman*)]
\item \textbf{No account of decoherence.} The 2006 paper worked exclusively with pure states on the Bloch sphere surface. Real quantum systems interact with their environments; their states become \emph{mixed}. A fuzzy membership degree that arises from environmental entanglement is fundamentally different from one arising from coherent superposition, yet the original framework could not distinguish them.

\item \textbf{No global structure.} Each quantum fuzzy set was treated as an isolated object. There was no account of how different fuzzy sets might be correlated, entangled, or derived from a common quantum source.

\item \textbf{No categorical foundation.} The promised \emph{Categories of Quantum Fuzzy Sets} paper was never written. Without it, the relationship between quantum fuzzy sets and the emerging categorical quantum mechanics of Abramsky and Coecke~\cite{Abramsky2004} remained unexplored.
\end{enumerate}

The present paper addresses (i) and (ii) directly, and makes a start on (iii).

\section{The Q-Matrix: A Global Realization Model}

The central idea is a shift in perspective. Rather than defining quantum fuzzy sets one at a time, we posit a single quantum system from which fuzzy structures are extracted.

\begin{definition}[Quantum Fuzzy Set]
A \textbf{Quantum Fuzzy Set} over a set $X$ is a function
$$\mu: X \to \mathfrak{D}(\mathcal{H})$$
assigning a density matrix $\rho_x = \mu(x)$ to each element $x \in X$, where $\mathfrak{D}(\mathcal{H})$ denotes the convex set of density matrices on a finite-dimensional Hilbert space $\mathcal{H}$.
\end{definition}

This is the primitive notion. The set $X$ is a semantic index set---its elements are the ``concepts'' or ``propositions'' whose fuzzy membership we wish to model---and $\mu$ assigns to each a quantum state encoding its semantic status.

\begin{definition}[Q-Matrix]\label{def:qmatrix}
A \textbf{Q-Matrix realization} of a quantum fuzzy set $(X, \mu)$ consists of a multipartite Hilbert space $\mathcal{H}_{\mathrm{global}} = \bigotimes_{x \in X} \mathcal{H}_x$ and a density matrix $\rho_{\mathrm{global}} \in \mathfrak{D}(\mathcal{H}_{\mathrm{global}})$ such that each section is obtained by partial trace:
$$\mu(x) = \mathrm{Tr}_{\overline{x}}\!\left(\rho_{\mathrm{global}}\right).$$
\end{definition}

Not every quantum fuzzy set arises from a Q-Matrix. The Q-Matrix provides an \emph{explanatory} structure: an account of where the density matrices come from and how they are correlated. When a Q-Matrix exists, it constrains the local sections through global entanglement---the section $\rho_x$ is not freely chosen but is determined by the global state. The categorical development in Section~\ref{sec:categorical} applies to all quantum fuzzy sets, whether or not they admit a Q-Matrix realization.

\section{From Bloch Sphere to Bloch Ball: The Density Matrix Extension}

The conceptual shift from the 2006 framework can be stated geometrically: we move from the \emph{surface} of the Bloch sphere to its \emph{interior}.

\subsection{Pure State Semantics (2006)}

The original formulation restricted truth to pure qubit states:
\begin{align}
|\psi_x\rangle &= \cos(\theta_x/2)|0\rangle + e^{i\phi_x}\sin(\theta_x/2)|1\rangle \\
\mu_A(x) &= |\langle 1|\psi_x\rangle|^2 = \sin^2(\theta_x/2)
\end{align}
Every point on the Bloch sphere surface corresponds to a definite superposition of ``fully not a member'' ($|0\rangle$) and ``fully a member'' ($|1\rangle$). The membership value is determined by the polar angle $\theta_x$ alone; the azimuthal phase $\phi_x$ carries additional quantum information---interferometric content---that has no classical fuzzy counterpart.

\subsection{Mixed State Semantics (2025)}

But the Bloch sphere surface is the space of \emph{pure} states---quantum systems in perfect isolation. The moment a qubit interacts with an environment, its state retreats into the interior of the ball:
\begin{align}
\rho_x &= \frac{1}{2}(\mathbb{I} + \vec{r}_x \cdot \vec{\sigma}), \quad \|\vec{r}_x\| \leq 1 \\
\mu_A(x) &= \mathrm{Tr}(\rho_x\, |1\rangle\langle 1|) = \frac{1 - r_z^{(x)}}{2}
\end{align}
The Bloch vector length $\|\vec{r}_x\|$ measures the \emph{purity} of the semantic state. When $\|\vec{r}_x\| = 1$ we recover the 2006 framework. When $\|\vec{r}_x\| < 1$, the state is mixed: its uncertainty is not mere superposition but genuine mixedness---either from classical ignorance or from entanglement with an environment. Density matrices allow us to represent both; pure states cannot.

\begin{remark}[Semantic decoherence]
Consider a concept whose membership is initially in a pure superposition (on the Bloch sphere surface). As it interacts with a larger semantic environment---through context, usage, or measurement---the Bloch vector shrinks toward the origin. The concept undergoes \emph{semantic decoherence}: its quantum fuzziness degrades into classical fuzziness. At the origin, $\rho = \mathbb{I}/2$, and membership is maximally uncertain in the classical sense.
\end{remark}

\subsection{A Concrete Example}\label{sec:catdogpet}

Consider three concepts: $X = \{\text{cat}, \text{dog}, \text{pet}\}$, encoded as density matrices:
\begin{align}
\rho_{\text{cat}} &= \begin{pmatrix} 0.7 & 0.3 \\ 0.3 & 0.3 \end{pmatrix}, \quad
\rho_{\text{dog}} = \begin{pmatrix} 0.4 & 0.2 \\ 0.2 & 0.6 \end{pmatrix}, \quad
\rho_{\text{pet}} = \begin{pmatrix} 0.5 & 0.1 \\ 0.1 & 0.5 \end{pmatrix}
\end{align}
These are mixed states (purities $0.76$, $0.60$, $0.52$ respectively) with nonzero off-diagonal entries indicating quantum coherence. Each has a classical membership value: $\mu(\text{cat}) = 0.3$, $\mu(\text{dog}) = 0.6$, $\mu(\text{pet}) = 0.5$. But the density matrices carry strictly more information than these scalars---the off-diagonal coherences and the purity levels are invisible to any classical fuzzy set representation.

Note that $\rho_{\text{pet}} \neq \frac{1}{2}(\rho_{\text{cat}} + \rho_{\text{dog}})$: the simple equal-weight mixture does not reproduce the ``pet'' state. This is consistent with (though does not by itself prove) a Q-Matrix origin in which the three concepts are correlated subsystems of a global state.

\section{Measurement and Semantic Queries}

If the Q-Matrix is the ontological ground, then scalar membership values arise through \emph{measurement}.

\begin{definition}[Semantic POVM]
A Positive Operator-Valued Measure $\{E_a\}_{a \in A}$ on $\mathcal{H}$ represents a \textit{semantic questioning protocol}. The probability of outcome $a$ for concept $x$ is:
$$p(a \mid x) = \mathrm{Tr}(\rho_x \cdot E_a).$$
\end{definition}

The use of POVMs rather than projective measurements is natural here. Projective measurements correspond to yes/no questions with orthogonal outcomes; POVMs allow \emph{overlapping} questions---precisely what we need for fuzzy concepts, which are not mutually exclusive.

This gives a three-level picture of quantum fuzzy truth:
\begin{enumerate}[nosep]
\item \textbf{Full quantum:} truth is a density matrix $\rho_x \in \mathfrak{D}(\mathcal{H})$.
\item \textbf{Measurement-relative:} truth is $\mathrm{Tr}(\rho_x \cdot E_a)$, depending on the chosen interrogation.
\item \textbf{Classical:} truth is a scalar $\mu(x) \in [0,1]$, obtained by fixing a measurement in the computational basis.
\end{enumerate}
Different interrogation protocols extract different scalar statistics from the same density-matrix-valued semantic state.

\section{Quantum Information as Semantic Content}

The Q-Matrix framework inherits the toolkit of quantum information theory, reinterpreted as semantic measures.

\subsection{Fidelity as Semantic Similarity}

The Uhlmann fidelity between Q-Matrix sections,
$$F(\rho_x, \rho_y) = \left(\mathrm{Tr}\sqrt{\sqrt{\rho_x}\,\rho_y\,\sqrt{\rho_x}}\right)^2,$$
provides a natural measure of semantic similarity. For the cat-dog-pet example of Section~\ref{sec:catdogpet}, the companion implementation computes $F(\text{cat}, \text{dog}) \approx 0.89$, $F(\text{cat}, \text{pet}) \approx 0.90$, $F(\text{dog}, \text{pet}) \approx 0.98$. Unlike scalar or vector-space similarity measures that depend only on amplitudes or coordinates, fidelity depends on the full density-matrix structure, including coherence.

\subsection{Holevo Information}

For a fuzzy ensemble $\{p_x, \rho_x\}$ extracted from the Q-Matrix, the Holevo quantity
$$\chi = S\!\left(\sum_x p_x \rho_x\right) - \sum_x p_x\, S(\rho_x)$$
(where $S(\rho) = -\mathrm{Tr}(\rho \log \rho)$ is von Neumann entropy) bounds the accessible classical information about which concept was prepared. When the $\rho_x$ are mixed states, the relationship between $\chi$ and the von Neumann entropy of the mixture provides an information-theoretic indication of the extent to which the fuzziness resists classical discrimination.

\subsection{Entanglement in the Q-Matrix}

When a Q-Matrix realizes a quantum fuzzy set, the global state may be entangled, causing the local sections to carry less information than the whole. The simplest illustration is a two-concept Q-Matrix in a Bell state:
$$|\Phi^+\rangle = \frac{1}{\sqrt{2}}(|00\rangle + |11\rangle), \qquad \rho_{\mathrm{global}} = |\Phi^+\rangle\langle\Phi^+|.$$
The local sections are $\rho_A = \rho_B = \mathbb{I}/2$---maximally mixed, carrying no information individually. Yet the global state is pure (zero entropy), and the quantum mutual information is $I(A\!:\!B) = S(A) + S(B) - S(AB) = 1 + 1 - 0 = 2$ bits. The two concepts are maximally correlated despite each being individually featureless.

More generally, for any Q-Matrix with $n$ subsystems, the sum of local entropies $\sum_x S(\rho_x)$ exceeds the global entropy $S(\rho_{\mathrm{global}})$ whenever entanglement is present. The excess measures the total correlation in the system---the signature of a Q-Matrix that cannot be decomposed into independent local fuzzy sets. These examples are illustrative information-theoretic diagnostics, not a full semantic theory of concept correlation; they show what the framework \emph{can express}, not what natural language concepts \emph{do}.

\section{Truth as Quantum State: An Interpretive Proposal}

The Q-Matrix framework suggests a shift in what we mean by ``truth'' in a fuzzy context.

In classical fuzzy logic, following Zadeh~\cite{Zadeh1965}, truth is a \emph{property}---a number $\mu_A(x) \in [0,1]$ assigned to a proposition. In the 2006 framework, truth became a \emph{point on the Bloch sphere}---richer than a number, but still a definite, pure state. In the Q-Matrix framework, truth is a \emph{density matrix}:
$$\mathrm{Truth}(x) = \rho_x \in \mathfrak{D}(\mathcal{H}).$$
A density matrix encodes what can be known, what cannot be known, and---in conjunction with the global Q-Matrix---why. It distinguishes ignorance (a maximally mixed state) from indeterminacy (a pure superposition) from decoherence (a mixed state arising from environmental entanglement). No scalar-valued truth assignment can make these distinctions.

We offer this as an \emph{interpretive proposal}, not a theorem. Whether it proves useful depends on whether the additional structure of density matrices captures aspects of vagueness, ambiguity, or conceptual structure that classical fuzziness misses. The companion implementation provides tools for exploring this question computationally.

\section{Comparison with Related Frameworks}

Several neighbouring frameworks illuminate the position of quantum fuzzy sets.

\textbf{Categorical Quantum Mechanics} (Abramsky--Coecke~\cite{Abramsky2004}) models quantum processes as morphisms in dagger compact categories. In the category $\mathbf{QFS}$ defined below, objects are Hilbert spaces \emph{indexed by semantic content}. The relationship between the two frameworks deserves further study; we make a start in Section~\ref{sec:categorical}.

\textbf{Quantum Natural Language Processing} (Coecke--Sadrzadeh--Clark~\cite{Coecke2010}) models meaning compositionally via tensor networks. The Q-Matrix provides a complementary perspective: meaning may also depend on global quantum correlations not reducible to local compositional rules. Both perspectives may be relevant.

\textbf{Quantum Sets} (Abramsky--Heunen~\cite{Abramsky2010}) embed classical set-theoretic structure into quantum contexts. Quantum Fuzzy Sets go in the opposite direction: we embed classical semantic indices into quantum state spaces.

\textbf{Classical Semantic Spaces} (Widdows~\cite{Widdows2004}) represent word meanings as vectors with inner-product similarity. The Q-Matrix generalizes this from vectors to density matrices and from inner products to fidelity. Whether this additional structure yields practical advantages is an open question.

\section{Preliminary Categorical Organization}\label{sec:categorical}

We now give a preliminary categorical organization of quantum fuzzy sets. This is a start on the long-promised ``Categories of Quantum Fuzzy Sets''---not the finished product, but enough structure to work with.

Throughout, we fix a finite-dimensional Hilbert space $\mathcal{H}$, write $\mathfrak{D}(\mathcal{H})$ for density matrices, and $\mathrm{CPM}(\mathcal{H})$ for CPTP maps.

\subsection{The Category $\mathbf{QFS}$}

\begin{definition}[The category $\mathbf{QFS}$]\label{def:qfs-cat}
\begin{itemize}[nosep]
\item \textbf{Objects}: pairs $(X, \mu)$ where $X$ is a set and $\mu: X \to \mathfrak{D}(\mathcal{H})$.
\item \textbf{Morphisms}: $(f, \Phi): (X, \mu_X) \to (Y, \mu_Y)$ where $f: X \to Y$ and $\Phi \in \mathrm{CPM}(\mathcal{H})$ satisfy
\begin{equation}\label{eq:morphism-condition}
\Phi\bigl(\mu_X(x)\bigr) = \mu_Y\bigl(f(x)\bigr) \qquad \text{for all } x \in X.
\end{equation}
\item \textbf{Composition}: $(g, \Psi) \circ (f, \Phi) := (g \circ f,\; \Psi \circ \Phi)$.
\item \textbf{Identity}: $(\mathrm{id}_X, \mathrm{id}_{\mathfrak{D}(\mathcal{H})})$.
\end{itemize}
\end{definition}

The morphism condition says that the following diagram commutes:
\begin{center}
\begin{tikzcd}
X \arrow[r, "f"] \arrow[d, "\mu_X"'] & Y \arrow[d, "\mu_Y"] \\
\mathfrak{D}(\mathcal{H}) \arrow[r, "\Phi"'] & \mathfrak{D}(\mathcal{H})
\end{tikzcd}
\end{center}

The requirement that $\Phi$ be CPTP is physically motivated: by Stinespring's theorem, every CPTP map is physically realizable.

\begin{proposition}[Well-definedness]\label{prop:well-defined}
$\mathbf{QFS}$ is a category.
\end{proposition}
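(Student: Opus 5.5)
The plan is to check the category axioms directly from Definition~\ref{def:qfs-cat}: closure of morphisms under composition, that identities are morphisms, associativity, and the unit laws. The work splits into the \emph{Set}-component and the \emph{CPTP}-component. Since a morphism is just a pair $(f,\Phi)$ subject to the condition \eqref{eq:morphism-condition}, associativity and the unit laws will be inherited componentwise from $\mathbf{Set}$ and from composition of linear maps. The real content lies in the two closure checks.

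\textbf{Closure under composition.} Take $(f,\Phi):(X,\mu_X)\to(Y,\mu_Y)$ and $(g,\Psi):(Y,\mu_Y)\to(Z,\mu_Z)$. First, $\Psi\circ\Phi$ is again CPTP. Trace preservation is immediate, since $\mathrm{Tr}\,\Psi(\Phi(\rho))=\mathrm{Tr}\,\Phi(\rho)=\mathrm{Tr}\,\rho$. For complete positivity I would note that $(\Psi\circ\Phi)\otimes\mathrm{id}_k=(\Psi\otimes\mathrm{id}_k)\circ(\Phi\otimes\mathrm{id}_k)$, and a composite of positive maps is positive. Alternatively one can compose Kraus representations: $\{B_jA_i\}$ satisfies $\sum_{i,j}A_i^\dagger B_j^\dagger B_jA_i=\mathbb{I}$. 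Second, the morphism condition follows by pasting the two commuting squares:
\[
(\Psi\circ\Phi)(\mu_X(x))=\Psi\bigl(\mu_Y(f(x))\bigr)=\mu_Z\bigl(g(f(x))\bigr).
\]

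\textbf{Identities.} The identity on $\mathfrak{D}(\mathcal{H})$ must be read as the identity channel $\mathrm{id}_{\mathcal{H}}$, which is CPTP, with Kraus operator $\mathbb{I}$. The condition $\mathrm{id}(\mu_X(x))=\mu_X(\mathrm{id}_X(x))$ is then trivial. The unit laws $(g,\Psi)\circ(\mathrm{id},\mathrm{id})=(g,\Psi)$ and associativity follow because equality of morphisms is equality of pairs, and both function composition and map composition are associative and unital.

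The only point needing care, and the one I would flag as the main subtlety rather than a genuine obstacle, is the ambiguity in what $\Phi$ is. The notation $\mathrm{CPM}(\mathcal{H})$ and the identity written as $\mathrm{id}_{\mathfrak{D}(\mathcal{H})}$ suggest maps on the convex set $\mathfrak{D}(\mathcal{H})$, whereas complete positivity is a property of linear maps on $B(\mathcal{H})$. I would fix the convention that $\Phi$ is a linear CPTP map on $B(\mathcal{H})$, restricted to $\mathfrak{D}(\mathcal{H})$, which it preserves. Under this convention all the verifications above are standard.
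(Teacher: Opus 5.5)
Your proposal is correct and follows the same route as the paper. You obtain the morphism condition for the composite by pasting the two commuting squares, and you inherit associativity and the unit laws componentwise from $\mathbf{Set}$ and from composition of CPTP maps; your extra checks (closure of CPTP maps under composition, via $(\Psi\circ\Phi)\otimes\mathrm{id}_k$ or the Kraus family $\{B_jA_i\}$, and that the identity channel is CPTP) are details the paper leaves implicit.
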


\begin{proof}
\emph{Composition.} For $(f, \Phi): (X, \mu_X) \to (Y, \mu_Y)$ and $(g, \Psi): (Y, \mu_Y) \to (Z, \mu_Z)$:
$$(\Psi \circ \Phi)\bigl(\mu_X(x)\bigr) = \Psi\bigl(\mu_Y(f(x))\bigr) = \mu_Z\bigl((g \circ f)(x)\bigr).$$
Associativity and identity laws follow from those of function composition and CPTP map composition.
\end{proof}

\subsection{Monoidal Structure}

\begin{definition}[Tensor product]\label{def:tensor}
$(X, \mu_X) \otimes (Y, \mu_Y) := (X \times Y,\; (x,y) \mapsto \mu_X(x) \otimes \mu_Y(y))$. On morphisms: $(f, \Phi) \otimes (g, \Psi) := (f \times g, \Phi \otimes \Psi)$. Unit: $I = (\{*\}, * \mapsto 1)$.
\end{definition}

\begin{proposition}\label{prop:monoidal}
This makes $\mathbf{QFS}$ a monoidal category.
\end{proposition}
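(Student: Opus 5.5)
The plan is to verify the monoidal axioms componentwise: once on the set side and once on the quantum side. The point is that $\mathbf{QFS}$ sits over the product of $(\mathbf{Set},\times,\{*\})$ and the category of finite-dimensional Hilbert spaces with CPTP maps, $(\mathbf{CPTP},\otimes,\mathbb{C})$. Both of these are known symmetric monoidal categories.

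There is one preliminary issue to settle first. As written, Definition~\ref{def:qfs-cat} fixes a single $\mathcal{H}$, but Definition~\ref{def:tensor} produces $\mu_X(x)\otimes\mu_Y(y)\in\mathfrak{D}(\mathcal{H}\otimes\mathcal{H})$, and the unit $*\mapsto 1$ lives in $\mathfrak{D}(\mathbb{C})$. So I would read the statement with objects as triples $(X,\mathcal{H}_X,\mu_X)$, where $\mathcal{H}_X$ is any finite-dimensional Hilbert space. Morphisms are then pairs $(f,\Phi)$ with $\Phi:\mathfrak{D}(\mathcal{H}_X)\to\mathfrak{D}(\mathcal{H}_Y)$ CPTP satisfying \eqref{eq:morphism-condition}. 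Proposition~\ref{prop:well-defined} goes through verbatim in this setting. Making this adjustment explicit is, I expect, the main obstacle: it is conceptual rather than computational, since the literal fixed-$\mathcal{H}$ reading does not support the tensor at all.

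With that in place, the key steps are as follows.

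\begin{enumerate}[label=(\arabic*)]
\item \emph{Well-definedness on morphisms.} $\Phi\otimes\Psi$ is CPTP, because tensor products of CPTP maps are CPTP. Moreover
\[
(\Phi\otimes\Psi)\bigl(\mu_X(x)\otimes\mu_Y(y)\bigr)=\Phi(\mu_X(x))\otimes\Psi(\mu_Y(y))=\mu_{X'}(f(x))\otimes\mu_{Y'}(g(y)),
\]
so $(f\times g,\Phi\otimes\Psi)$ satisfies \eqref{eq:morphism-condition}.
\item \emph{Bifunctoriality.} The identities $(f'\times g')\circ(f\times g)=(f'f)\times(g'g)$ and $(\Phi'\otimes\Psi')(\Phi\otimes\Psi)=\Phi'\Phi\otimes\Psi'\Psi$ hold, and identities are preserved.
\item \emph{Structural isomorphisms.} Take $\alpha=(\alpha^{\mathbf{Set}},\alpha^{\mathrm{Hilb}})$, i.e.\ $((x,y),z)\mapsto(x,(y,z))$ paired with conjugation by the unitary reassociator. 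Likewise $\lambda,\rho$ are the projections $\{*\}\times X\cong X$ paired with the canonical identifications $\mathbb{C}\otimes\mathcal{H}\cong\mathcal{H}$. Each pair satisfies \eqref{eq:morphism-condition}; for example $(\sigma_1\otimes\sigma_2)\otimes\sigma_3\mapsto\sigma_1\otimes(\sigma_2\otimes\sigma_3)$, and $1\otimes\mu(x)\mapsto\mu(x)$. The inverse pairs are also morphisms, so these are isomorphisms in $\mathbf{QFS}$.
\item \emph{Naturality and coherence.} The forgetful assignment $(X,\mathcal{H}_X,\mu)\mapsto(X,\mathcal{H}_X)$ is faithful into $\mathbf{Set}\times\mathbf{CPTP}$. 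Composition, tensor, $\alpha$, $\lambda$ and $\rho$ are all defined componentwise. Hence naturality squares, the pentagon and the triangle identity in $\mathbf{QFS}$ are equalities of pairs of morphisms, and they hold because they hold in each factor.
\end{enumerate}

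I would remark at the end that the same argument, using the swap $(x,y)\mapsto(y,x)$ together with the unitary swap, yields a symmetric monoidal structure.
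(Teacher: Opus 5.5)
Your proof is correct and follows the paper's route: the morphism condition for $(f\times g,\Phi\otimes\Psi)$ is exactly the identity $(\Phi\otimes\Psi)(\rho\otimes\sigma)=\Phi(\rho)\otimes\Psi(\sigma)$, and you inherit the structural isomorphisms, naturality and coherence componentwise from $\mathbf{Set}$ and the CPTP category, which you spell out in more detail than the paper does. Your reformulation with objects $(X,\mathcal{H}_X,\mu_X)$ is a genuine repair that the paper's two-line proof tacitly needs, since for a fixed $\mathcal{H}$ with $\dim\mathcal{H}>1$ neither $\mu_X(x)\otimes\mu_Y(y)$ nor the unit $1$ lies in $\mathfrak{D}(\mathcal{H})$.
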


\begin{proof}
The morphism condition for $(f \times g, \Phi \otimes \Psi)$ follows from $(\Phi \otimes \Psi)(\rho \otimes \sigma) = \Phi(\rho) \otimes \Psi(\sigma)$. Coherence isomorphisms are inherited from $\mathbf{Set}$ and $\mathrm{CPM}$.
\end{proof}

\begin{remark}[Product states only]\label{rem:product-states}
The monoidal product produces only product states. Entangled states arise when a compound object has a semantic embedding that does not factor---precisely the Q-Matrix situation of Definition~\ref{def:qmatrix}.
\end{remark}

\subsection{Dagger on Isomorphisms}

\begin{lemma}\label{lem:dagger-unitary}
If $(f, \Phi)$ is an isomorphism in $\mathbf{QFS}$ ($f$ bijective, $\Phi$ with CPTP inverse), then $\Phi$ is a unitary channel: $\Phi(\rho) = U\rho U^\dagger$.
\end{lemma}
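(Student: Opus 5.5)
The plan is to reduce the statement to the classical fact that a CPTP map on a finite-dimensional $\mathfrak{D}(\mathcal{H})$ with a CPTP inverse is a unitary conjugation. The proof uses only the condition that $\Phi$ and $\Phi^{-1}$ are both CPTP. The bijectivity of $f$ and the morphism condition \eqref{eq:morphism-condition} play no role beyond guaranteeing that the inverse morphism $(f^{-1},\Phi^{-1})$ lives in $\mathrm{CPM}(\mathcal{H})$.

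The first step is to show that $\Phi$ maps pure states to pure states. Write a pure state $P=|\psi\rangle\langle\psi|$ as an extreme point of $\mathfrak{D}(\mathcal{H})$. Both $\Phi$ and $\Phi^{-1}$ are affine bijections of the convex set $\mathfrak{D}(\mathcal{H})$ onto itself, since each is linear and trace preserving. Suppose $\Phi(P)=\lambda\sigma_1+(1-\lambda)\sigma_2$ with $\sigma_i\in\mathfrak{D}(\mathcal{H})$. Applying $\Phi^{-1}$ gives $P=\lambda\Phi^{-1}(\sigma_1)+(1-\lambda)\Phi^{-1}(\sigma_2)$, which forces $\sigma_1=\sigma_2$. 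Hence $\Phi(P)$ is extreme, i.e.\ pure.

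The second step extracts the unitary from this. Take a Kraus decomposition $\Phi(\rho)=\sum_k K_k\rho K_k^\dagger$ with $\sum_k K_k^\dagger K_k=\mathbb{I}$. For every $|\psi\rangle$, the operator $\sum_k K_k|\psi\rangle\langle\psi|K_k^\dagger$ has rank one, so all the vectors $K_k|\psi\rangle$ are collinear.

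This is the step I expect to be the main obstacle: passing from collinearity for each $\psi$ separately to a uniform statement. The claim to aim for is that each $K_k=c_k V$ for a single operator $V$ and scalars $c_k$. The argument is a standard linear-algebra lemma. Suppose two linear maps $A,B$ satisfy that $A\psi$ and $B\psi$ are linearly dependent for every $\psi$. Then either $A$ and $B$ are proportional, or both have rank at most one with a common range. In the second case $\Phi$ would have rank at most one as a map on states, contradicting invertibility on $\mathfrak{D}(\mathcal{H})$ when $\dim\mathcal{H}\geq 2$. The case $\dim\mathcal{H}=1$ is trivial.

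It then follows that $\Phi(\rho)=V\rho V^\dagger$ after absorbing $\sum_k|c_k|^2$ into $V$. Trace preservation gives $V^\dagger V=\mathbb{I}$. In finite dimension this makes $V=:U$ unitary, which completes the proof.

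An alternative route, to be used if the collinearity lemma proves messy, works at the level of the Choi matrix. The same extreme-point argument applies to $\Phi^{-1}$. The Choi matrix $J(\Phi)$ is positive, and invertibility forces its rank to be one. The key input here is Wigner's theorem, or Kadison's theorem on affine automorphisms of the state space. Such an automorphism is conjugation by either a unitary or an antiunitary. The antiunitary case corresponds to a transpose, which is not completely positive, so it is excluded by complete positivity.
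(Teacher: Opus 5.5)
Your main argument is correct and takes a genuinely different route from the paper. The paper simply invokes the Wigner/Kadison classification of affine automorphisms of $\mathfrak{D}(\mathcal{H})$ (conjugation by a unitary or an antiunitary), then discards the antiunitary case on the grounds that CPTP maps are linear.

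Your proof is elementary. Extreme points are preserved because $\Phi^{-1}$ maps states to states, so for each $\psi$ the vectors $K_k\psi$ are collinear. Your local-linear-dependence lemma is true. If $\mathrm{rank}\,B\ge 2$, the ratio $t(\psi)$ in $A\psi=t(\psi)B\psi$ is forced to be constant on the dense set $\{B\psi\neq 0\}$. If $B=|b\rangle\langle u|$, then $A$ maps a dense set, hence everything, into $\mathrm{span}(b)$. The lemma therefore collapses the Kraus family.

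You need one more sentence to pass from pairs to the whole family. If some $K_{k_0}$ has rank $\ge 2$, every $K_k$ is a multiple of it. Otherwise all nonzero $K_k$ share one range $\mathrm{span}(b)$, so $\Phi(\rho)=|b\rangle\langle b|$ for every state, contradicting injectivity when $\dim\mathcal{H}\ge 2$.

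Your route is self-contained and avoids a heavy classification theorem. It uses complete positivity of $\Phi$ exactly once, through the Kraus form, and needs only positivity and trace preservation of $\Phi^{-1}$. It also shows transparently why no antiunitary case can arise. The paper's route is shorter, but it rests on the cited theorem.

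Your fallback is essentially the paper's argument, with a better justification. The antiunitary alternative acts on density matrices as $\rho\mapsto U\rho^{T}U^\dagger$, which is linear, positive and trace preserving. So the paper's appeal to linearity does not exclude it. It is complete positivity that rules out the transpose, exactly as you say. Your sentence about the Choi matrix having rank one is just the conclusion restated and can be dropped.
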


\begin{proof}
A bistochastic automorphism of $\mathfrak{D}(\mathcal{H})$ is implemented by a unitary or antiunitary (Wigner). CPTP maps are linear, ruling out the antiunitary case.
\end{proof}

\begin{proposition}\label{prop:dagger}
$(f, \Phi)^\dagger := (f^{-1}, \Phi^{-1})$ makes the isomorphism groupoid a dagger monoidal category.
\end{proposition}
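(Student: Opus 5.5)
We verify directly that the assignment $(f,\Phi)^\dagger := (f^{-1},\Phi^{-1})$ satisfies the axioms of a dagger monoidal category on the core (the wide subcategory of isomorphisms) of $\mathbf{QFS}$.

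\textbf{Well-definedness.} Let $(f,\Phi)\colon (X,\mu_X)\to(Y,\mu_Y)$ be an isomorphism. Then $f$ is a bijection and $\Phi$ has a CPTP inverse. Applying $\Phi^{-1}$ to the morphism condition \eqref{eq:morphism-condition} at the point $x=f^{-1}(y)$ gives
\[
\Phi^{-1}\bigl(\mu_Y(y)\bigr)=\mu_X\bigl(f^{-1}(y)\bigr),
\]
so $(f^{-1},\Phi^{-1})$ is a morphism $(Y,\mu_Y)\to(X,\mu_X)$. By Lemma~\ref{lem:dagger-unitary} we may write $\Phi(\rho)=U\rho U^\dagger$. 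Then $\Phi^{-1}(\rho)=U^\dagger\rho U$, which is the Hilbert-space adjoint channel of $\Phi$. This identification is what justifies calling the construction a ``dagger'' rather than merely an inverse.

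\textbf{Dagger axioms.} The dagger is the identity on objects. Involutivity holds because $(f^{-1})^{-1}=f$ and $(\Phi^{-1})^{-1}=\Phi$. Contravariant functoriality holds because $(g\circ f)^{-1}=f^{-1}\circ g^{-1}$, and likewise for channels; in particular $\mathrm{id}^\dagger=\mathrm{id}$. In a groupoid the choice ``dagger $=$ inverse'' automatically makes every morphism unitary, so there is nothing further to check at this level.

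\textbf{Monoidal compatibility.} We must first check that the tensor product of Proposition~\ref{prop:monoidal} restricts to the core. If $(f,\Phi)$ and $(g,\Psi)$ are isomorphisms, then $(f\times g,\Phi\otimes\Psi)$ has inverse $(f^{-1}\times g^{-1},\Phi^{-1}\otimes\Psi^{-1})$. Here $\Phi^{-1}\otimes\Psi^{-1}$ is CPTP, being the tensor product of two unitary channels, and it is inverse to $\Phi\otimes\Psi$ by functoriality of $\otimes$ on channels. This computation also gives
\[
\bigl((f,\Phi)\otimes(g,\Psi)\bigr)^\dagger=(f,\Phi)^\dagger\otimes(g,\Psi)^\dagger.
\]
Finally, the associator, unitors and symmetry are inherited from $\mathbf{Set}$ and from the unitary swap and rearrangement channels. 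They are isomorphisms, hence lie in the core, and their daggers coincide with their inverses, so they are unitary as required.

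\textbf{Main obstacle: the role of $\mathcal{H}$ in the tensor product.} Definition~\ref{def:tensor} produces objects valued in $\mathfrak{D}(\mathcal{H}\otimes\mathcal{H})$ rather than $\mathfrak{D}(\mathcal{H})$, even though $\mathcal{H}$ was fixed. I would handle this by reading $\mathbf{QFS}$, for the purposes of the monoidal structure, as allowing the Hilbert space to vary over the tensor powers of $\mathcal{H}$, with channels between the corresponding state spaces. This is the reading implicitly used in Proposition~\ref{prop:monoidal}. With that reading, Lemma~\ref{lem:dagger-unitary} applies verbatim on each space. Every step above is then a formal consequence of the strict inverse laws, so once this bookkeeping convention is fixed the verification is routine.
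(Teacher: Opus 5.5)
Your proposal is correct and takes the same route as the paper, whose proof simply says that involutivity, contravariance and monoidal compatibility follow from the properties of function inversion and unitary inversion; you spell out each of these checks, including that the inverse satisfies the morphism condition and that the tensor product restricts to the core. Your remark that Definition~\ref{def:tensor} actually lands in $\mathfrak{D}(\mathcal{H}\otimes\mathcal{H})$, with the unit in $\mathfrak{D}(\mathbb{C})$, points to a genuine bookkeeping gloss in the paper's monoidal structure, and letting the Hilbert space range over tensor powers of $\mathcal{H}$ is a reasonable repair.
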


\begin{proof}
Involutivity, contravariance, and monoidal compatibility follow from the corresponding properties of function inversion and unitary inversion.
\end{proof}

\begin{remark}
The dagger is defined only on the isomorphism subgroupoid, not on all of $\mathbf{QFS}$. Extending it to the full category---perhaps via Selinger's CPM construction---is an open problem.
\end{remark}

\subsection{Fibration over $\mathbf{Set}$}

\begin{proposition}\label{prop:fibration}
The forgetful functor $U: \mathbf{QFS} \to \mathbf{Set}$ sending $(X, \mu) \mapsto X$ is a fibration. The Cartesian lift of $f: Y \to X$ over $(X, \mu_X)$ is $(f, \mathrm{id}): (Y, f^*\mu_X) \to (X, \mu_X)$ where $f^*\mu_X(y) := \mu_X(f(y))$.
\end{proposition}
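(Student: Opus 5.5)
The plan is to check the definition of a Cartesian morphism directly, in three steps.

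First, the proposed lift is a morphism of $\mathbf{QFS}$ lying over $f$. We need $\mathrm{id}\bigl((f^*\mu_X)(y)\bigr) = \mu_X(f(y))$ for every $y\in Y$, and this holds by the definition of $f^*\mu_X$. Since $U$ forgets the channel component, $U(f,\mathrm{id}) = f$.

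Second, and this is the main step, we verify the universal property. Take any morphism $(h,\Psi):(Z,\mu_Z)\to(X,\mu_X)$ and any function $g:Z\to Y$ with $f\circ g = h$. We must produce a unique $(g,\Phi):(Z,\mu_Z)\to(Y,f^*\mu_X)$ over $g$ satisfying $(f,\mathrm{id})\circ(g,\Phi) = (h,\Psi)$. Since composition acts componentwise, the equation forces $\mathrm{id}\circ\Phi = \Psi$, that is, $\Phi=\Psi$. This settles uniqueness at once.

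Existence then reduces to checking that $(g,\Psi)$ satisfies condition \eqref{eq:morphism-condition}:
\[
\Psi\bigl(\mu_Z(z)\bigr) = \mu_X(h(z)) = \mu_X(f(g(z))) = (f^*\mu_X)(g(z)).
\]
The first equality is the morphism condition for $(h,\Psi)$, and the second uses $h = f\circ g$. So every morphism into $(X,\mu_X)$ whose underlying function factors through $f$ lifts uniquely through $(f,\mathrm{id})$, and $(f,\mathrm{id})$ is Cartesian. Since every $f$ and every object over its codomain admits such a lift, $U$ is a fibration.

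The expected obstacle is conceptual rather than computational. Morphisms of $\mathbf{QFS}$ carry a channel component that $U$ does not see, so "lying over $g$" leaves $\Phi$ unconstrained a priori. The argument works only because the lift uses the identity channel, which forces $\Phi=\Psi$ exactly. I would note explicitly that a lift with a non-identity channel would fail this uniqueness whenever that channel is not injective. As an optional remark, I would also check that the reindexing functors $f^*$ are strictly functorial, with $(f\circ g)^* = g^*f^*$, so that the fibration is in fact split.
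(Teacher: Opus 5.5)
Your proof is correct and is the paper's argument. The lift satisfies the morphism condition by definition of $f^*\mu_X$, and any $(h,\Psi)$ with $h=f\circ g$ factors uniquely as $(f,\mathrm{id})\circ(g,\Psi)$ because the identity channel forces $\Phi=\Psi$; you also check explicitly that $(g,\Psi)$ is a morphism into $(Y,f^*\mu_X)$, which the paper leaves implicit.
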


\begin{proof}
The morphism condition holds trivially. Universality: any $(g, \Psi)$ lying over $g = f \circ h$ factors uniquely as $(f, \mathrm{id}) \circ (h, \Psi)$.
\end{proof}

The fibration exhibits $\mathbf{QFS}$ as a bundle of quantum semantics over classical index sets. The fiber over $X$ is the space $\mathfrak{D}(\mathcal{H})^X$; morphisms within a fiber are CPTP maps transforming one semantic embedding into another.

\subsection{The Classical Limit}\label{sec:classical-limit}

\begin{definition}[Classicality]
A quantum fuzzy set $(X, \mu)$ is \emph{classical} if there exists an orthonormal basis $\{|e_i\rangle\}$ in which every $\mu(x)$ is diagonal.
\end{definition}

\begin{proposition}[Characterization]\label{prop:classical}
$(X, \mu)$ is classical if and only if the density matrices $\{\mu(x)\}_{x \in X}$ pairwise commute.
\end{proposition}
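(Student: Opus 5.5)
The plan is to prove the two directions separately. The forward direction is a one-line computation. The converse is the standard theorem that a commuting family of normal operators can be simultaneously diagonalized. Throughout, $\mathcal{H}$ is finite-dimensional, say of dimension $d$, and every $\mu(x)$ is Hermitian (indeed positive). The set $X$ may be infinite, so the argument must not depend on its cardinality.

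\emph{($\Rightarrow$)} Suppose there is an orthonormal basis $\{|e_i\rangle\}$ in which every $\mu(x)$ is diagonal. Write $U$ for the unitary whose columns are the $|e_i\rangle$. Then $\mu(x) = U D_x U^\dagger$ with each $D_x$ diagonal. Diagonal matrices commute, so
$$\mu(x)\mu(y) = U D_x D_y U^\dagger = U D_y D_x U^\dagger = \mu(y)\mu(x).$$

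\emph{($\Leftarrow$)} I first reduce to a finite family. The real span $V$ of $\{\mu(x)\}_{x\in X}$ lies inside the space of Hermitian $d\times d$ matrices, which is finite-dimensional. So $V$ has a finite basis $\mu(x_1),\dots,\mu(x_k)$. Any orthonormal basis that diagonalizes these $k$ matrices diagonalizes every element of $V$, and hence every $\mu(x)$.

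I then prove, by induction on $d$, that finitely many pairwise commuting Hermitian operators $A_1,\dots,A_k$ on a $d$-dimensional space have a common orthonormal eigenbasis.
\begin{itemize}[nosep]
\item \textbf{Trivial case.} If $d=1$, or if every $A_j$ is a scalar multiple of the identity, any orthonormal basis works.
\item \textbf{Splitting.} Otherwise choose some $A_j$ with at least two distinct eigenvalues. By the spectral theorem, $\mathcal{H}$ is the orthogonal direct sum of its eigenspaces $E_\lambda$, each of dimension strictly less than $d$.
\item \textbf{Invariance.} For each $i$ and each $v\in E_\lambda$, commutativity gives $A_j A_i v = A_i A_j v = \lambda A_i v$. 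Hence $A_i v \in E_\lambda$, so every $A_i$ maps each $E_\lambda$ into itself.
\item \textbf{Induction.} The restrictions $A_i|_{E_\lambda}$ are still Hermitian, since the restriction of a self-adjoint operator to an invariant subspace is self-adjoint, and they still pairwise commute. The induction hypothesis therefore gives an orthonormal eigenbasis of each $E_\lambda$ common to all the restrictions.
\item \textbf{Assembly.} Concatenating these bases over the mutually orthogonal eigenspaces gives an orthonormal basis of $\mathcal{H}$ that diagonalizes every $A_i$.
\end{itemize}

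The only real obstacle is the converse, specifically the invariance-of-eigenspaces step and keeping the induction well-founded. Choosing an operator with at least two distinct eigenvalues settles well-foundedness, because each eigenspace then has strictly smaller dimension than $\mathcal{H}$. The finite-dimensionality reduction is what handles an arbitrary, possibly infinite, index set $X$.
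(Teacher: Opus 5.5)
Your proof is correct and takes the same route as the paper: for the forward direction, diagonal matrices commute; for the converse, you use simultaneous diagonalization of commuting Hermitian matrices, which the paper simply cites and you prove by the standard eigenspace-invariance induction. Your reduction to a finite basis of $\mathrm{span}_{\mathbb{R}}\{\mu(x)\}$ is a worthwhile addition, because the paper's cited fact covers only finitely many matrices, whereas $X$ is an arbitrary set.
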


\begin{proof}
Diagonal matrices in a common basis commute. Conversely, finitely many pairwise commuting Hermitian matrices are simultaneously diagonalizable.
\end{proof}

\begin{proposition}[Decoherence produces classical fuzzy sets]\label{prop:decoherence}
For any quantum fuzzy set $(X, \mu)$ and any basis $\{|e_i\rangle\}$, applying the decoherence channel $\mathcal{D}(\rho) = \sum_i |e_i\rangle\langle e_i| \rho |e_i\rangle\langle e_i|$ yields a classical quantum fuzzy set $(X, \mathcal{D} \circ \mu)$. In the qubit case, the induced classical membership values are $\mu_{\mathrm{classical}}(x) = \langle 1|\mu(x)|1\rangle \in [0,1]$.
\end{proposition}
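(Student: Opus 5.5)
The plan is to verify three things directly from the definitions: that $\mathcal{D}$ maps density matrices to density matrices, that its image consists of matrices diagonal in the fixed basis $\{|e_i\rangle\}$, and that the qubit formula holds. Classicality will then follow straight from the Definition of classicality, with the chosen basis as the witness. Proposition~\ref{prop:classical} is not needed, though it gives a consistency check, since diagonal matrices pairwise commute.

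\emph{Step 1: $\mathcal{D}\circ\mu$ is a quantum fuzzy set.} I will write $P_i = |e_i\rangle\langle e_i|$, so that $\mathcal{D}(\rho)=\sum_i P_i\rho P_i$. This is a Kraus form with $\sum_i P_i^\dagger P_i = \sum_i P_i = \mathbb{I}$, so $\mathcal{D}$ is CPTP and in particular sends $\mathfrak{D}(\mathcal{H})$ to itself. Hence $(X,\mathcal{D}\circ\mu)$ is an object of $\mathbf{QFS}$. As a side remark, $(\mathrm{id}_X,\mathcal{D})$ is then a morphism $(X,\mu)\to(X,\mathcal{D}\circ\mu)$, because the morphism condition~\eqref{eq:morphism-condition} holds by construction.

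\emph{Step 2: diagonality.} For each $x$, I will expand $P_i\mu(x)P_i = \langle e_i|\mu(x)|e_i\rangle\, P_i$. This gives
$$\mathcal{D}(\mu(x))=\sum_i \langle e_i|\mu(x)|e_i\rangle\, |e_i\rangle\langle e_i|,$$
which is diagonal in $\{|e_i\rangle\}$, with the same basis serving for every $x\in X$. This is exactly the definition of classical.

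\emph{Step 3: qubit membership values.} Take $\mathcal{H}=\mathbb{C}^2$ and the computational basis. By Step 2 the diagonal entries of $\mathcal{D}(\mu(x))$ are the diagonal entries of $\mu(x)$. The induced membership is therefore $\mathrm{Tr}(\mathcal{D}(\mu(x))\,|1\rangle\langle1|)=\langle1|\mu(x)|1\rangle$. This value lies in $[0,1]$ because $\mu(x)\ge 0$ gives nonnegativity, and $\mathrm{Tr}\,\mu(x)=1$ together with $\langle0|\mu(x)|0\rangle\ge0$ bounds it above by $1$.

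There is no real obstacle here. The only point needing care is that the statement says ``any basis'' while the qubit formula uses $|1\rangle$. I will read the formula as referring to the computational basis $\{|0\rangle,|1\rangle\}$, matching the mixed-state membership $\mu_A(x)=\mathrm{Tr}(\rho_x|1\rangle\langle1|)$ given earlier. I will also note that decoherence does not change this value, since $\mathcal{D}$ preserves diagonal entries.
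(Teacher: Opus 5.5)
Your proposal is correct and follows the paper's argument. The paper's entire proof is your Step 2: $\mathcal{D}(\mu(x))$ is diagonal in the fixed basis $\{|e_i\rangle\}$ for every $x$, which is the definition of classical. Your Steps 1 and 3 make explicit what the paper leaves implicit: that $\mathcal{D}$ is CPTP, so $\mathcal{D}\circ\mu$ is again $\mathfrak{D}(\mathcal{H})$-valued (this uses that the basis is orthonormal and complete), and the computational-basis reading of the qubit formula together with its $[0,1]$ bound.
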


\begin{proof}
$\mathcal{D}(\mu(x))$ is diagonal in $\{|e_i\rangle\}$ for every $x$.
\end{proof}

\begin{example}[Cat-Dog-Pet]\label{ex:catdogpet-classical}
The density matrices of Section~\ref{sec:catdogpet} do not pairwise commute, so the triple is not classical. Decoherence in the computational basis yields $\mu(\text{cat}) = 0.3$, $\mu(\text{dog}) = 0.6$, $\mu(\text{pet}) = 0.5$. The off-diagonal coherences are irreversibly lost.
\end{example}

\subsection{Toward Frobenius Algebras: An Obstruction}\label{sec:frobenius}

In categorical quantum mechanics, classical structures are characterized by commutative special $\dagger$-Frobenius algebras~\cite{Abramsky2004,Abramsky2010}: a classical structure on $\mathcal{H}$ in $\mathbf{FdHilb}$ is equivalent to a choice of orthonormal basis. It is natural to ask whether this characterization works internally in $\mathbf{QFS}$.

The answer is not straightforward. A Frobenius comultiplication on $(X, \mu)$ would require a morphism $(d, \Delta): (X, \mu) \to (X, \mu) \otimes (X, \mu)$ where $d(x) = (x,x)$. The morphism condition~\eqref{eq:morphism-condition} demands:
$$\Delta(\mu(x)) = \mu(x) \otimes \mu(x) \qquad \text{for all } x \in X.$$
But the ``copy map'' $\Delta$ of a Frobenius algebra in $\mathbf{FdHilb}$ acts as $\Delta(|e_i\rangle\langle e_j|) = \delta_{ij} |e_i e_i\rangle\langle e_j e_j|$. For diagonal states $\mu(x) = \sum_i \lambda_i |e_i\rangle\langle e_i|$:
$$\Delta(\mu(x)) = \sum_i \lambda_i |e_i e_i\rangle\langle e_i e_i| \;\neq\; \mu(x) \otimes \mu(x) = \sum_{i,j} \lambda_i \lambda_j |e_i e_j\rangle\langle e_i e_j|$$
in general. The copy map copies classical information (diagonal entries) while annihilating coherences; it does not literally copy density matrices, as the no-cloning theorem forbids.

This means the standard Frobenius route does not lift directly to $\mathbf{QFS}$. Possible resolutions include relaxing the morphism condition for Frobenius structure, or enriching the categorical framework (e.g., via the CPM construction). We leave these as open problems and note that the clean characterization---classicality $\Leftrightarrow$ simultaneous diagonalizability (Proposition~\ref{prop:classical})---does not require Frobenius machinery. Accordingly, the present paper does not claim an internal Frobenius characterization of classicality in $\mathbf{QFS}$.

\section{Companion Implementation}\label{sec:implementation}

A Python reference implementation accompanies this paper as a supplementary repository (link provided in the arXiv metadata).

\noindent The package \texttt{qmatrix} (Python $\geq$ 3.9, MIT licence) provides four modules: \texttt{core} (density matrices, quantum fuzzy sets, Q-Matrix with partial trace), \texttt{categorical} (channels, morphisms, tensor products, classicality tests), \texttt{information} (fidelity, Holevo information, coherence measures), and \texttt{circuits} (optional Qiskit integration). Three example scripts reproduce results from the paper. A test suite of 40 tests covers density matrix validation, QFS operations, categorical structure, and information measures. Architectural details are given in Appendix~\ref{app:repo}.

\section{Open Problems and Future Directions}

\emph{Internal Frobenius algebras.} The obstruction of Section~\ref{sec:frobenius} is the most pressing open problem. Resolving it would connect the Q-Matrix framework directly to the programme of categorical quantum mechanics.

\emph{Full dagger structure.} The dagger on $\mathbf{QFS}$ lives only on the isomorphism subgroupoid. Extending it to the full category would require accommodating non-unitary channels and sub-normalized states.

\emph{Q-Matrix tomography.} Given local observations of Q-Matrix sections, can $\rho_{\mathrm{global}}$ be reconstructed? This is the semantic analogue of quantum state tomography.

\emph{Continuous semantic spaces.} Replacing the finite index set $X$ with a manifold would yield density-matrix-valued fields with differential geometric structure.

\emph{Coherence diagnostics.} The $\ell_1$-norm of coherence $C(\rho_x) = \sum_{i \neq j} |\rho_{ij}|$ could serve as an empirical signature of semantic decoherence.

\emph{Physical realization.} Simple instances of the framework are compatible with current quantum software and, in principle, with near-term quantum hardware. Work on quantum annealers~\cite{Pourabdollah2021} and Qiskit simulations~\cite{Buss2024} suggests the engineering barriers are surmountable.

\section{Conclusion}

The Q-Matrix framework develops a line of thought that began in 2006. The original insight---that quantum states naturally generalize fuzzy membership---turns out to be the surface manifestation of a richer semantic picture in which truth is represented by quantum states rather than scalars.

The intervening two decades of work by other researchers confirmed the computational naturality of the quantum-fuzzy correspondence. What was missing was a coherent account of decoherence, global structure, and categorical organization. The present paper provides the first two and makes a start on the third. The paper identifies a workable first architecture: density matrices as truth values, the Q-Matrix as a global realization, partial trace as extraction, and a preliminary category $\mathbf{QFS}$ with monoidal structure and a fibration over $\mathbf{Set}$.

Much remains open. The Frobenius-algebra question, the full dagger structure, and the empirical payoff of the framework are all unresolved. But the density matrix extension of fuzzy truth values and the Q-Matrix realization model provide, we believe, a workable foundation for further development.

\appendix
\section{Companion Library: \texttt{Q-Matrix}}\label{app:repo}

\subsection{Architecture}

\noindent\textbf{\texttt{qmatrix.core}} (Sections 3--4). \texttt{DensityMatrix} validates Hermiticity, positive semidefiniteness, and unit trace at construction. Properties: purity, von~Neumann entropy, Bloch vector, $\ell_1$-coherence, membership value. \texttt{QuantumFuzzySet} stores $\mu: X \to \mathfrak{D}(\mathcal{H})$ with \texttt{is\_classical} and \texttt{decohere}. \texttt{QMatrix} implements partial trace and mutual information.

\medskip
\noindent\textbf{\texttt{qmatrix.categorical}} (Section~\ref{sec:categorical}). \texttt{QuantumChannel}: Kraus representation, validated trace-preservation, composition, tensor, inverse (unitary only). \texttt{QFSMorphism}: validates~\eqref{eq:morphism-condition} at construction. \texttt{is\_classical} (also available under the legacy name \texttt{admits\_frobenius\_algebra}): pairwise commutativity / simultaneous-diagonalizability test (Proposition~\ref{prop:classical}).

\medskip
\noindent\textbf{\texttt{qmatrix.information}} (Section 7). Fidelity, Bures/trace distance, Holevo information, coherence measures.

\medskip
\noindent\textbf{\texttt{qmatrix.circuits}} (optional). Qiskit circuit construction and simulation.

\subsection{Design Principles}

\emph{Validation at the boundary.} Malformed density matrices, illegal morphisms, and non-trace-preserving channels are rejected at construction.

\emph{Propositions reflected in tests.} Several structural results are reflected in corresponding tests: classicality (Proposition~\ref{prop:classical}), decoherence (Proposition~\ref{prop:decoherence}), composition and dagger (Propositions~\ref{prop:well-defined},~\ref{prop:dagger}).

\emph{Immutability.} All objects are immutable after construction; composition and dagger return new objects.

\subsection{Tests and Examples}

40 tests cover: \texttt{DensityMatrix} (9), \texttt{QuantumFuzzySet} (5), \texttt{QMatrix} (4), \texttt{QuantumChannel} (6), \texttt{QFSMorphism} (4), tensor products (1), classicality (4), information measures (7). Three example scripts: cat-dog-pet (Section~\ref{sec:catdogpet}), Bell-state Q-Matrix (Section~7.3), and categorical morphisms (Section~\ref{sec:categorical}).

\subsection{Layout}

\begin{verbatim}
qmatrix/
|-- qmatrix/
|   |-- __init__.py        |-- categorical.py
|   |-- core.py            |-- information.py
|   +-- circuits.py
|-- tests/test_qmatrix.py
|-- examples/
|   |-- cat_dog_pet.py     |-- semantic_entanglement.py
|   +-- categorical_structure.py
|-- pyproject.toml  |-- LICENSE  +-- README.md
\end{verbatim}


\end{document}